\algnewcommand{\Initialize}[1]{%
	\State \textbf{Initialize:}
	\Statex \hspace*{\algorithmicindent}\parbox[t]{.8\linewidth}{\raggedright #1}
}
\newlist{steps}{enumerate}{1}
\setlist[steps, 1]{label = Step \arabic*:}
\def\BibTeX{{\rm B\kern-.05em{\sc i\kern-.025em b}\kern-.08em T\kern-.1667em\lower.7ex\hbox{E}\kern-.125emX}}
\newtheorem{proposition}{Proposition} 
\theoremstyle{remark}
\newtheorem{remark}{Remark}
\newcommand{\esp}{\mathbb{E}}
\newcommand{\var}{\operatorname{Var}}
\newcommand{\ttag}{\mathbf{t}}
\begin{document}

\title{Cell-Free Massive MIMO-Based Physical-Layer Authentication   \vspace{-0.3em}

\author{

\IEEEauthorblockN{Isabella W. G. da Silva$^{*}$, Zahra Mobini$^{*\dag}$, Hien Quoc Ngo$^{*}$, and Michail Matthaiou$^{*}$}

\IEEEauthorblockA{$^{*}$Centre for Wireless Innovation (CWI), Queen's University Belfast, U.K.}

\IEEEauthorblockA{$^{\dag}$Department of Electrical and Electronic Engineering, The University of Manchester, Manchester, U.K.}

\IEEEauthorblockA{E-mails: \{iwgdasilva01, hien.ngo, m.matthaiou\}@qub.ac.uk, zahra.mobini@manchester.ac.uk}

\vspace{-0.45cm}
}

\thanks{This work is a contribution by Project REASON, a UK Government funded project under the Future Open Networks Research Challenge (FONRC) sponsored by the Department of Science Innovation and Technology (DSIT). It was also supported by the U.K. Engineering and Physical Sciences Research Council (EPSRC) (grants No. EP/X04047X/1 and EP/X040569/1). The work of I. W. G. da Silva, Z.~Mobini, and  H.~Q.~Ngo was supported by the U.K. Research and Innovation Future Leaders Fellowships under Grant MR/X010635/1, and a research grant from the Department for the Economy Northern Ireland under the US-Ireland R\&D Partnership Programme. The work of M. Matthaiou was supported by the European Research Council (ERC) under the European Union’s Horizon 2020 research and innovation programme (grant agreement No. 101001331).}}

\maketitle

\begin{abstract}
In this paper, we  exploit the cell-free massive multiple-input multiple-output (CF-mMIMO) architecture to design a physical-layer authentication (PLA)  framework that can simultaneously authenticate multiple distributed users across the coverage area.
Our proposed scheme remains effective even in the presence of active adversaries attempting impersonation attacks to disrupt the authentication process. Specifically, we introduce a tag-based PLA  CF-mMIMO system, wherein the access points (APs) first estimate their channels with the legitimate users during an uplink training phase. Subsequently, a unique secret key is generated and securely shared between each user and the APs. We then formulate a hypothesis testing problem and derive a closed-form expression for the probability of detection for each user in the network. Numerical results validate the effectiveness of the proposed approach, demonstrating that it maintains a high detection probability even as the number of users in the system increases.
\end{abstract}
\begin{IEEEkeywords}
Cell-free massive MIMO,  physical-layer authentication (PLA),  physical-layer security (PLS), probability of detection (PD).
\end{IEEEkeywords}

\section{Introduction}\label{sec:Introduction}

Physical-layer security (PLS) has attracted significant research interest in recent years, emerging as a promising approach to enhancing the security of wireless networks~\cite{9279294, Mobini:TIFS:2019}. A key operation within PLS is node authentication, which aims to ensure that communicating nodes can accurately and uniquely identify each other. Existing  PLA  schemes can generally be categorized into two types: passive and active. In passive PLA schemes, authentication is performed based on physical-layer characteristics of the received signal, such as hardware impairments~\cite{8920228}, as well as inherent properties of the communication channel, including the channel impulse response and angle of arrival~\cite{10694672}.

Although extensively studied, passive schemes are often sensitive to external factors such as temperature fluctuations and channel variability, which can limit their practical applicability. In contrast, active schemes involve a mutual agreement between the transmitter and receiver on a proof of authentication—commonly referred to as a tag—which is embedded within the transmitted signal and extracted by the receiver during data transmission~\cite{10005830}. Tag-based PLA schemes offer various advantages, including low overhead, reduced complexity, and enhanced security. The randomness introduced by channel fading and noise increases the uncertainty for potential attackers, thereby improving robustness~\cite{10031268}. For example, in~\cite{10005830}, a tag-based PLA scheme was introduced for a reconfigurable intelligent surface (RIS)-assisted communication system, where the tag is constructed using the channel characteristics, background noise, and a random signal to thwart impersonation attacks. In addition,~\cite{10031268} addressed the challenge of authenticating and classifying transmitters in a multi-user scenario relying on orthogonal transmissions.  

 A common assumption in the existing literature is the presence of a  single-antenna receiver  and either a single transmitter or multiple transmitters availing of multi-carrier transmission. However, multi-carrier transmission becomes increasingly inefficient in terms of resource allocation as the number of users grows. The requirement to assign orthogonal subcarriers or time slots to each user leads to scalability issues and reduced spectral efficiency (SE).  On the other hand,  studying active PLA schemes with  multiple-input multiple-output (MIMO)/massive MIMO   receivers remains largely unexplored.  PLA with a MIMO receiver was investigated in~\cite{9672766} and~\cite{5740593}, where the authors proposed a fingerprint-embedding authentication framework and demonstrated that authentication performance can be enhanced when precoding is applied to both the data and the fingerprint. However, a key limitation of these studies is the assumption of a single receiver and the restriction that only one user—either a legitimate user or an attacker—can communicate with the receiver during a given transmission block.

To address this research gap, we propose an active PLA scheme capable of simultaneously authenticating multiple users in a distributed wireless environment. Our approach is built upon the CF-mMIMO architecture, which has emerged as a promising technology for next-generation wireless networks~\cite{Mohammadi:PROC.2024}.
In the proposed scenario, multiple legitimate users aim to be authenticated by multiple distributed APs connected to a central processing unit (CPU), while a malicious attacker attempts to compromise the process by injecting spoofed signals to impersonate legitimate users. Our scheme enables concurrent authentication of multiple users without relying on orthogonal resources or massive antenna arrays at a centralized receiver. By leveraging the spatial diversity and decentralized structure of CF-mMIMO, the proposed scheme enhances both the scalability and robustness of physical-layer authentication in the presence of sophisticated adversaries. The main contributions of this work are as follows:
\begin{itemize}
    \item We propose a CF-mMIMO tag-based PLA scenario, in which the APs first estimate their channels with the legitimate users using pilot signals from the uplink training phase. Subsequently, a unique secret key is generated and securely shared between each user and the APs. 
    \item We formulate the authentication problem as a hypothesis testing framework and derive closed-form expressions for the probability of false alarm (PFA) and probability of detection (PD) for each legitimate user in the CF-mMIMO network. 
    \item Numerical results demonstrate that the proposed CF-mMIMO PLA  system   effectively enhances  the PD  across various scenarios. In particular, a PD performance of over $90\%$ is achieved when there are 8 APs in the system.
\end{itemize}


\textit{Notation. } Throughout this paper, bold upper-case letters denote matrices whereas bold lower-case letters denote vectors; $(\cdot)^T$ and $(\cdot)^H$ stand for the matrix transpose and Hermitian transpose, respectively; $\mathrm{Tr}(\cdot)$ is the trace operator; $\|\cdot\|$ and $|\cdot|$ are the Euclidean-norm and the absolute value operator; $\esp\{\cdot\}$ is the expectation operator, and $\var(a)\triangleq\esp\left\{|a - \esp\{a\}|^2\right\}$; $\mathbb{R}\{\cdot\}$ represents a real-value extractor. Finally, a circular symmetric complex Gaussian vector $\mathbf{x}$ with covariance matrix $\mathbf{C}$ is denoted by $\mathbf{x}\sim\mathcal{CN}(\mathbf{0}, \mathbf{C})$.

\section{System Model}
\begin{figure}[t]
    \centering
    \includegraphics[scale=0.45]{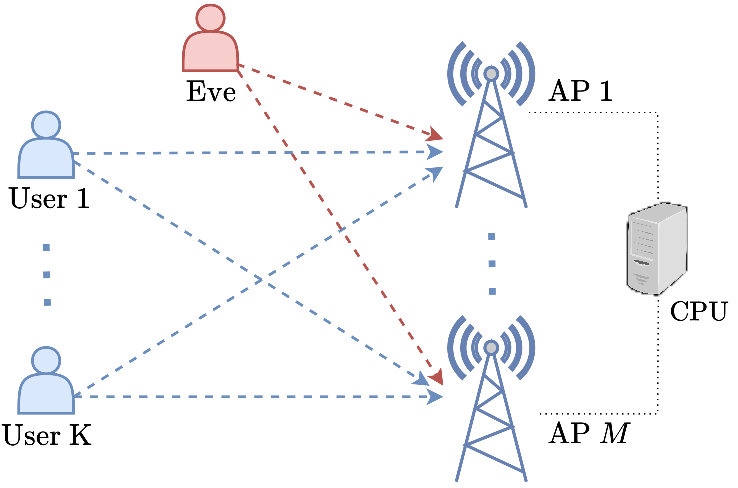}
    \caption{CF-mMIMO tag-based PLA system model.}
    \label{fig:sysmodel}
\end{figure}
As illustrated in Fig.~\ref{fig:sysmodel}, we consider a CF-mMIMO tag-based PLA  scenario in which $K$ legitimate users communicate with $M$ APs in the presence of an attacker, denoted as Eve. The APs are equipped with $N$ antennas, while both the legitimate users and Eve are assumed to have a single antenna. Moreover, all the nodes operate in half-duplex mode. All APs connect to a CPU which aims to authenticate the messages coming from the legitimate users. A PLA scheme based on a shared secret key between the APs and users is considered. Specifically, the $k$th user and the APs generate a secret key $\mathbf{l}_k$, which is unique for each user and unknown to Eve. The secret key is then employed by the users to generate a proof of authentication, named tag. Each user superimposes the created tag onto its message signal, and transmits the composition of the message plus tag to the APs.  We assume that Eve is a malicious user aware of the authentication process but with no knowledge of the secret key, thus Eve aims to disrupt the authentication process by impersonating the legitimate users, transmitting spoofing messages to the APs, hoping that the CPU mistakenly accepts it as authentic. 


To reflect a more practical scenario, we consider an uplink training phase between the legitimate users and the APs. During this phase, the users transmit uplink pilot signals, allowing the APs to estimate their respective channels with each user. Once the channel estimation is complete, the authentication scheme is initiated using the channels estimated during the training phase.. The details of the uplink training procedure are provided in the following subsection.


\subsection{Uplink Training}\label{sec:uptraining}
We assume that the $K$ users simultaneously transmit pilot sequences to the APs, with the pilot sequence transmitted by the $k$th user being $\bm{\varphi}_k \in \mathbb{C}^{\tau_{\text{p}} \times 1}$, where $\tau_{\text{p}}$ is the pilot sequence length. The pilot sequences are considered pairwisely orthonormal, i.e., $\bm{\varphi}_k^H\bm{\varphi}_{k'}=0$ if $k\neq k'$, with $\|\bm{\varphi}_k\|^2=1$, $\forall k = 1, \dots, K$. Thus, it is required that $\tau_{\text{p}}\geq K$. Hence, the received pilot matrix at the $m$th AP is given by
\begin{align}
    \mathbf{Y}_{\text{u},m} = \sqrt{\tau_{\text{p}}}\sum_{k=1}^{K} \rho_k\mathbf{g}_{mk}\bm{\varphi}_{k}^H + \bm{\Omega}_{\text{u},m},
\end{align}
where $\rho_k$ is the maximum normalized transmit power for the $k$th user, and $\bm{\Omega}_{\text{u},m} \in \mathbb{C}^{N \times \tau_{\text{p}}}$ is the noise matrix at the $m$th AP, assumed to have independent and identically distributed (i.i.d.) $\mathcal{CN}(0,1)$ entries. In addition, $\mathbf{g}_{mk} \in \mathbb{C}^{N \times 1}$ is the channel vector from the $k$th user to the $m$th AP, modeled as
\begin{align}\label{eq:gmodel}
    \mathbf{g}_{mk} = \sqrt{\beta_{mk}}\mathbf{h}_{mk},
\end{align}
where $\beta_{mk}$ is the large-scale fading coefficient, while $\mathbf{h}_{mk}$ is the small-scale fading vector, with i.i.d. $\mathcal{CN}(0,1)$ entries. Next, the received signal $\mathbf{Y}_{\text{u},m}$ is projected onto  $\bm{\varphi}_k$ to estimate the channel vector from user $k$ to the $m$th AP as 
\begin{align}\label{eq:tildey}
     \Tilde{\mathbf{y}}_{mk} &= \mathbf{Y}_{\text{u},m}\bm{\varphi}_{k} \nonumber\\ 
     &= \sqrt{\tau_{\text{p}}\rho_k}\mathbf{g}_{mk}+ \bm{\Omega}_{\text{u},m}\bm{\varphi}_{k}.
\end{align}

Given~\eqref{eq:tildey}, the estimate of $\mathbf{g}_{mk}$ can be attained via a linear minimum mean-squared error (MMSE) approach as
\begin{align}
    \hat{\mathbf{g}}_{mk} &= \esp\{\mathbf{g}_{mk}\Tilde{\mathbf{y}}_{mk}^H\}\left(\esp\left\{\Tilde{\mathbf{y}}_{mk}\Tilde{\mathbf{y}}_{mk}^H\right\}\right)^{-1}\Tilde{\mathbf{y}}_{mk}\nonumber\\
    &=\frac{\sqrt{\tau_{\text{p}}\rho_k}\beta_{mk}}{\tau_{\text{p}}\rho_k\beta_{mk}+1}\Tilde{\mathbf{y}}_{mk}.
\end{align}
Thus, $\hat{\mathbf{g}}_{mk}$ has i.i.d. $\mathcal{CN}\left(0,\frac{\tau_{\text{p}}\rho_k\beta_{mk}^2}{\tau_{\text{p}}\rho_k\beta_{mk}+1}\right)$ elements.

\section{Authentication Framework}
In this section, we present the proposed  CF-mMIMO PLA scheme to authenticate the users  in the presence of an active attacker. First, we provide details on the authentication tag generation process employed by the users and the transmitted tagged signal. Subsequently, we describe the CPU processing to recover the tags and the authentication hypothesis test.   

\subsection{Transmission Phase}
 After the uplink training phase, the legitimate users and APs agree on an authentication scheme based on a shared secret key. The $k$th user generates and shares its secret key $\mathbf{l}_k$ with the APs. The secret keys from all of the users are assumed to be independent from each other and unknown to Eve~\cite{8546764}. Next, during the data transmission phase, if user $k$ intends to be authenticated by the CPU, a proof of authentication should be transmitted along with the message signal. This proof of authentication, commonly named as tag, reflects knowledge of the secret key. 

 The data is considered to be transmitted in blocks, with length $L$. Let $\mathbf{s}_{k} = [s_{k,1}, \dots, s_{k,L}] \in \mathbb{C}^{L \times 1}$ denote the message signal from the $k$th user in a particular transmission block.  
 Then, the tagged signal transmitted by the $k$th user is
 \begin{align}\label{eq:taggedsignal}
     \mathbf{x}_{k} = \rho_{\text{s}} \mathbf{s}_{k} + \rho_{\text{t}} \ttag_{k},
 \end{align}
 where $\ttag_{k} = [t_{k,1}, \dots, t_{k,L}] \in \mathbb{C}^{L \times 1}$ is the authentication tag, written as
 \begin{align}\label{eq:tag}
     \ttag_{k} = g(\mathbf{s}_{k}, \mathbf{l}_k),     
 \end{align}
 where $g(\cdot)$ is a secure hash function~\cite{bookCrypto}. We assume that the message signal and the authentication tag satisfy $\esp\{\|\mathbf{s}_k\|^2\}=\esp\{\|\ttag_k\|^2\}=L$ and  $\esp\{s_{k,l}\} = \esp\{t_{k,l}\} = 0, \forall k \in \{1, \dots, K\} , \forall l \in \{1, \dots, L\}$,  while $\esp\{\|\mathbf{x}_k\|^2\}=L$.
 We also consider that the users generate independent authentication tags, $\esp\{\ttag_k^H\ttag_{k'}\}=0, \forall k' \neq k, k' \in \{1, \dots, K\}$, and that the message signals are not correlated to the tags~\cite{4451099,10031268}, i.e., $\esp\{\ttag_k^H\mathbf{s}_{k}\}=0$. Moreover, $\rho_{\text{s}}$ and $\rho_{\text{t}}$ are the power allocation coefficients of the message and of the tag signal, respectively, which are carefully chosen to satisfy $\rho_{\text{t}} \ll  \rho_{\text{s}}$ and {$\rho_{\text{s}}^2 + \rho_{\text{t}}^2 = 1$}. Consequently, if $\rho_{\text{s}}=1$, it indicates that user $k$ is not transmitting the authentication tag during the transmission block.  
  
  At the same time, Eve conducts an impersonation attack to disrupt the authentication process. Since Eve is assumed to be aware of all the authentication processes except for the secret key~\cite{8546764}, its transmitted signal is described as 
 \begin{align}
     \mathbf{x}_{\text{e}} = \rho_{\text{s}} \mathbf{s}_{\text{e}} + \rho_{\text{t}} \ttag_{\text{e}},
 \end{align}
 where $\mathbf{x}_{\text{e}}$ is the signal transmitted by Eve, with $\mathbf{s}_{\text{e}} = [s_{\text{e},1}, \dots, s_{\text{e},L}] \in \mathbb{C}^{L \times 1}$ being the message signal from Eve. The false authentication tag is given by $\ttag_{\text{e}} = g(\mathbf{s}_{\text{e}}, \mathbf{l}_{\text{e}})$, where $\mathbf{l}_{\text{e}}$ is a random key generated by Eve. It is further assumed that the message signal and authentication tag generated by Eve should also satisfy $\esp\{\|\mathbf{s}_{\text{e}}\|^2\}=\esp\{\|\ttag_{\text{e}}\|^2\}=L$ and $\esp\{s_{\text{e},l}\} = \esp\{t_{\text{e},l}\} = 0, \forall l \in \{1, \dots, L\}$. Also, we consider that $\esp\{\|\mathbf{x}_{\text{e}}\|^2\}=L$, and that the false authentication tag is not correlated to the message signal $\esp\{\ttag_{\text{e}}^H\mathbf{s}_{\text{e}}\}=0$. Hence, the $N \times L$ received signal at the $m$th AP is given by
 \begin{align}\label{eq:sigm}
     \mathbf{\bar{Y}}_{m} = \sum_{k=1}^{K}\sqrt{\rho_k}\mathbf{g}_{mk}\mathbf{x}_{k}^H+\sqrt{\rho_{\text{e}}}\mathbf{g}_{m\text{e}}\mathbf{x}_{\text{e}}^H+\bm{\bar{\Omega}}_{m},
 \end{align}
 where $\bm{\bar{\Omega}}_{m}\in\mathbb{C}^{N\times L}$ is the noise matrix at the $m$th AP, which is assumed to have i.i.d. $\mathcal{CN}(0,1)$ entries. In addition, $\mathbf{g}_{m\text{e}} \in \mathbb{C}^{N \times 1}$ is the channel vector between Eve and the $m$th AP, modeled as 
 \begin{align}\label{eq:gemodel}
    \mathbf{g}_{m\text{e}} = \sqrt{\beta_{m\text{e}}}\mathbf{h}_{m\text{e}},
\end{align}
 where $\beta_{m\text{e}}$ is corresponding the large-scale fading coefficient, and $\mathbf{h}_{m\text{e}}$ is the associated small-scale fading vector, assumed to be $\mathcal{CN}(\mathbf{0},\mathbf{I}_N)$ distributed. Moreover, $\rho_{\text{e}}$ is the maximum normalized transmit power at Eve.
\subsection{Signal Detection and Authentication Phase}
 To detect $\mathbf{x}_{k}$, the $m$th AP uses the channel estimates acquired in the uplink training phase described previously in Sec.~\ref{sec:uptraining}, $\hat{\mathbf{G}}_{m}=[\hat{\mathbf{g}}_{m1}, \dots, \hat{\mathbf{g}}_{mK}]$, to combine its received signal. A widely-used local zero-forcing (ZF) combining technique is deployed. This requires $N \geq K$. More precisely, an $N \times K$ ZF combining matrix is designed as
 \begin{align}
     \mathbf{W}_m = \hat{\mathbf{G}}_{m}\left(\hat{\mathbf{G}}_{m}^{H}\hat{\mathbf{G}}_{m}\right)^{-1}.
 \end{align}
Thus, the aggregated received signal at the CPU is given by
 \begin{align}\label{eq:zc}
     \mathbf{Z}_{\mathrm{c}}^H &= \sum_{m=1}^{M}\mathbf{W}_m^{H}\mathbf{\bar{Y}}_{m}\nonumber\\
     &=\sum_{m=1}^{M}\left(\mathbf{W}_m^{H}\left(\sum_{k=1}^{K}\sqrt{\rho_k}\mathbf{g}_{mk}\mathbf{x}_{k}^H+\sqrt{\rho_{\text{e}}}\mathbf{g}_{m\text{e}}\mathbf{x}_{\text{e}}^H+\bm{\bar{\Omega}}_{m}\right)\right).
 \end{align}
The $k$th row of $\mathbf{Z}_{\mathrm{c}}^H$ is used to detect $\mathbf{x}_k$. From \eqref{eq:zc}, the $k$th row of $\mathbf{Z}_{\mathrm{c}}^H$ is given by
\begin{align}\mathbf{z}_{\text{c},k}^H=\sum_{m=1}^{M}\left(\mathbf{d}_{mk}^H+\mathbf{i}_{mk}^H+\mathbf{n}_{mk}^H\right),
\end{align}
where
\begin{align}
    \mathbf{d}_{mk}^H &\triangleq \sqrt{\rho_k}\mathbf{w}_{mk}^H\mathbf{g}_{mk}\mathbf{x}_{k}^H,\label{eq:dc}\\
    \mathbf{i}_{mk}^H &\triangleq \sum_{k' \neq k}^{K}\sqrt{\rho_{k'}}\mathbf{w}_{mk}^H\mathbf{g}_{mk'}\mathbf{x}_{k'}^H,\\
    \mathbf{n}_{mk}^H &\triangleq \sqrt{\rho_{\text{e}}}\mathbf{w}_{mk}^H\mathbf{g}_{m\text{e}}\mathbf{x}_{\text{e}}^H+\mathbf{w}_{mk}^H\bm{\bar{\Omega}}_{m},\label{eq:nc}
\end{align}
where $\mathbf{w}_{mk}$ is the $k$th column of $\mathbf{W}_m$. Also, $\mathbf{d}_{k}$, $\mathbf{i}_{k}$ and $\mathbf{n}_{k}$ stand  for the desired signal, the multi-user interference, and the noise corresponding to user $k$, respectively. Note that, since the APs do not have prior knowledge about the presence of Eve in the communication process, her signal is regarded as noise. Furthermore, given that $\rho_{\text{t}} \ll  \rho_{\text{s}}$, the estimation of the message signal from the $k$th user, $\hat{\mathbf{s}}_k$, can be obtained from $\mathbf{z}_{\text{c},k}$ with a tolerable bit error rate (BER), while ignoring the presence of the authentication tag~\cite{7120016,10005830}. 

After recovering the messages, the CPU must decide if they are   authentic or not. To this end, the CPU uses the estimated message $\hat{\mathbf{s}}_k$ and the secret key corresponding to the $k$th user, $\mathbf{l}_k$, to construct an expected authentication tag, $\Tilde{\ttag}_k$ as
\begin{align}
      \Tilde{\ttag}_k = g(\hat{\mathbf{s}}_k, \mathbf{l}_k). 
\end{align}
After generating $\Tilde{\ttag}_k$, it is also necessary to recover the received tag from the residual signal, denoted by  $\mathbf{r}_k$, as  
\begin{align}
    \mathbf{r}_k = \frac{1}{\rho_{\text{t}}}\left(\frac{\mathbf{z}_{\text{c},k}}{\sqrt{\rho_k}} - \rho_{\text{s}}\hat{\mathbf{s}}_k\right).
\end{align}
Then, the authentication problem can be summarized as a binary hypothesis testing problem for each  user $k$. In particular, the hypothesis are expressed as
\begin{align}\label{eq:autdes}
    \left\{\begin{array}{l}
\mathcal{H}_0: \Tilde{\ttag}_k \text{ is not present in } \mathbf{r}_k, \\
\mathcal{H}_1: \Tilde{\ttag}_k \text{ is present in } \mathbf{r}_k,
\end{array}\right. 
\end{align}
where $\mathcal{H}_1$ represents that the correct tag exists in the
received signal, and user $k$ is being correctly authenticated, whereas $\mathcal{H}_0$ denotes that $\mathbf{r}_k$ does not contain the correct tag. To perform the hypothesis test, the CPU performs a threshold test between $\mathcal{H}_0$ and $\mathcal{H}_1$ for each user $k$ to make the authentication decision, which is written as
\begin{align}\label{eq:threshold}
    \lambda_k \stackrel{\mathcal{H}_1}{\underset{\mathcal{H}_0}{\gtrless}} \theta_k,
\end{align}
where $\lambda_k$ and $\theta_k$ are the test statistic and the decision threshold of the $k$th user, respectively. Here,  $\lambda_k$ is constructed by match-filtering the residual signal $\mathbf{r}_k$ with $\Tilde{\ttag}_k$, that is
\begin{align}\label{eq:test}
    \lambda_k =  \mathbb{R}\{\Tilde{\ttag}_k^H\mathbf{r}_k\}.
\end{align}
As discussed in~\cite{4451099}, if the employed hash function $g(\cdot)$ is robust against errors, $\Tilde{\ttag}_k$ can be correctly generated even if the recovered message contains some errors. Therefore, it is feasible to assume perfect message recovery ($\hat{\mathbf{s}}_k = \mathbf{s}_k$), and tag estimation ($\Tilde{\ttag}_k = \ttag_k$) for the hypothesis testing. For notation simplicity, let us define
\begin{align}
    a_{k,k} &\triangleq \sqrt{\rho_{k}}\sum_{m=1}^{M}\mathbf{w}_{mk}^H\mathbf{g}_{mk},\\
    a_{k,k'} &\triangleq \sqrt{\rho_{k'}}\sum_{m=1}^{M}\mathbf{w}_{mk}^H\mathbf{g}_{mk'}, \forall k' \neq k,\\
    a_{k,\text{e}} &\triangleq \sqrt{\rho_{\text{e}}}\sum_{m=1}^{M}\mathbf{w}_{mk}^H\mathbf{g}_{m\text{e}}.
\end{align}
We derive the expressions for the test statistic when the $k$th user transmits only the message signal ($\lambda_k | \mathcal{H}_0$), and when a tagged signal is received from the $k$th user ($\lambda_k | \mathcal{H}_1$), which are later employed in Section~\ref{sec:perf} to compute the PFA and PD of a certain user $k$. Accordingly, when the $k$th user transmits only the message signal ($\rho_{\text{s}} = 1$), the test statistic is expressed as
\begin{align}\label{eq:lambdakh0}
   \lambda_k | \mathcal{H}_0 &=\ttag_k^H\mathbf{r}_k \nonumber\\ 
   &= \frac{(a_{k,k}-\sqrt{\rho_k})\rho_{\text{s}}}{\sqrt{\rho_k}\rho_{\text{t}}}\ttag_k^H\mathbf{s}_k + \ttag_k^H\sum_{k' \neq k}^K\frac{a_{k,k'}}{\sqrt{\rho_k}\rho_{\text{t}}}\mathbf{x}_{k'}\nonumber\\
    &~~+\frac{a_{k,\text{e}}}{\sqrt{\rho_k}\rho_{\text{t}}}\ttag_k^H\mathbf{x}_{\text{e}} + \frac{1}{\sqrt{\rho_k}\rho_{\text{t}}}\ttag_k^H\mathbf{n}_k,
\end{align} 
where $\mathbf{n}_k \triangleq \sum_{m=1}^M \left(\mathbf{w}_{mk}^H\bm{\bar{\Omega}}_{m}\right)^H$. On the other hand, when the tagged signal is transmitted, we obtain
\begin{align}\label{eq:lambdakh1}
    \lambda_k | \mathcal{H}_1 &= \frac{a_{k,k}}{\sqrt{\rho_k}}\|\ttag_k\|^2 +\frac{(a_{k,k}-\sqrt{\rho_k})\rho_{\text{s}}}{\sqrt{\rho_k}\rho_{\text{t}}}\ttag_k^H\mathbf{s}_k \nonumber\\
    &~~+ \ttag_k^H\sum_{k' \neq k}^K\frac{a_{k,k'}}{\sqrt{\rho_k}\rho_{\text{t}}}\mathbf{x}_{k'} +\frac{a_{k,\text{e}}}{\sqrt{\rho_k}\rho_{\text{t}}}\ttag_k^H\mathbf{x}_{\text{e}} + \frac{1}{\sqrt{\rho_k}\rho_{\text{t}}}\ttag_k^H\mathbf{n}_k.
\end{align}

\section{Performance Analysis}\label{sec:perf}
In this section, we aim to evaluate the performance of our proposed PLA scheme. Based on the threshold test in \eqref{eq:threshold}, if the CPU accepts $\mathcal{H}_1$ when $\mathcal{H}_0$ is true for the $k$th user, then a false alarm has occurred. Meanwhile, if the CPU accepts $\mathcal{H}_1$ when $\mathcal{H}_1$ is true, a correct detection has happened.  Accordingly, we derive the closed-form expressions for the PFA and PD for the $k$th legitimate user, and we also determine the optimal decision threshold for each user $k$, $\theta^*_k$. The optimal decision threshold is computed according to the Neyman-Pearson criterion~\cite{850674}. Thus, $\theta^*_k$ is set to maximize the PD of the $k$th user for a fixed PFA rate, $p_{\mathrm{FA},k}$. The closed-form expressions for the PFA, PD and for the decision threshold are presented in the next proposition. 
\begin{proposition}\label{prop:cf}
    Based on the authentication decision in~\eqref{eq:threshold}, the closed-form expressions for the PFA, the PD, and the optimal decision threshold of a certain user $k$ in a CF-mMIMO system in the presence of an attacker performing an impersonation attack are given, respectively, by
    \begin{align}
        \mathrm{P}_{\mathrm{FA}, k}&=Q\left(\frac{\theta_k}{\sqrt{L \xi_k}}\right),\label{eq:pfakf}\\
        \mathrm{P}_{\mathrm{D}, k}&=Q\left(\frac{\theta_k - ML}{\sqrt{L\xi_k}}\right),\label{eq:pdkf}\\
    \theta_k^* &= Q^{-1}(p_{\mathrm{FA},k})\sqrt{L \xi_k},\label{eq:opttheta}
    \end{align}
    where $Q(\cdot)$ represents the Q-function. Moreover, $\xi_k$ is given by
    \begin{align}
        \xi_k = &\frac{\rho_{\text{s}}^2}{\rho_k\rho_{\text{t}}^2}(\sigma_{k,k}^2+(\sqrt{\rho_k}M-1)^2) + \left(\frac{1}{\rho_k} + \frac{\rho_{\text{s}}^2}{\rho_k\rho_{\text{t}}^2}\right)\nonumber\\
        &\times\left(\sum_{k' \neq k}^K\sigma_{k,k'}^2 + \sigma_{k,e}^2\right)+ \frac{1}{\rho_k\rho_{\text{t}}^2}\mathrm{Tr}(\mathbf{C}_{nk}),
    \end{align}
    where $\sigma_{k,k}^2$, $\sigma_{k,k'}^2$ and $\sigma_{k,e}^2$ denote the variances of $a_{k,k}$, $a_{k,k'}$ and $a_{k,e}$, respectively. Also, $\mathbf{C}_{nk}$ is the covariance of $\mathbf{n}_k$.
\end{proposition}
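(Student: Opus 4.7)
The plan is a two-moment analysis followed by a Gaussian approximation: I treat $\lambda_k|\mathcal{H}_0$ and $\lambda_k|\mathcal{H}_1$ in \eqref{eq:lambdakh0}--\eqref{eq:lambdakh1} as sums of many nearly independent zero-mean contributions scaled by random combining coefficients, invoke a central-limit-type argument justified by the large blocklength $L$ and the aggregation across $M$ APs, and then read off $\mathrm{P}_{\mathrm{FA},k}$ and $\mathrm{P}_{\mathrm{D},k}$ as Q-function evaluations at the standardized threshold. Inverting \eqref{eq:pfakf} then yields \eqref{eq:opttheta}; because $\mathrm{P}_{\mathrm{D},k}$ is monotone in the mean gap under the Gaussian model, that threshold is Neyman--Pearson-optimal for the prescribed $p_{\mathrm{FA},k}$.

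First I would establish the means. Under $\mathcal{H}_0$, every summand of \eqref{eq:lambdakh0} has the form $c\cdot\ttag_k^H\mathbf{u}$ with $\mathbf{u}\in\{\mathbf{s}_k,\mathbf{x}_{k'},\mathbf{x}_{\text{e}},\mathbf{n}_k\}$ zero-mean and independent of $\ttag_k$, so $\esp\{\lambda_k|\mathcal{H}_0\}=0$. Under $\mathcal{H}_1$ the only additional term is $(a_{k,k}/\sqrt{\rho_k})\|\ttag_k\|^2$, whose expectation is $M\cdot L$: the ZF identity $\mathbf{w}_{mk}^H\hat{\mathbf{g}}_{mk}=1$ together with independence of the MMSE estimation error $\mathbf{g}_{mk}-\hat{\mathbf{g}}_{mk}$ from the combiner gives $\esp\{a_{k,k}\}=\sqrt{\rho_k}M$, while $\esp\{\|\ttag_k\|^2\}=L$. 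This reproduces the $ML$ shift in \eqref{eq:pdkf}.

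For the variance under either hypothesis, I would exploit the mutual uncorrelatedness of $\ttag_k^H\mathbf{s}_k$, $\ttag_k^H\mathbf{x}_{k'}$, $\ttag_k^H\mathbf{x}_{\text{e}}$ and $\ttag_k^H\mathbf{n}_k$ to decouple contributions, condition on the $a_{k,\cdot}$ coefficients, and use $\esp\{|\ttag_k^H\mathbf{s}_k|^2\}=\esp\{|\ttag_k^H\mathbf{x}_{k'}|^2\}=\esp\{|\ttag_k^H\mathbf{x}_{\text{e}}|^2\}=L$ together with $\esp\{|\ttag_k^H\mathbf{n}_k|^2\}=\mathrm{Tr}(\mathbf{C}_{nk})$; the subsequent outer expectation returns $\var(a_{k,k'})=\sigma_{k,k'}^2$ and $\var(a_{k,\text{e}})=\sigma_{k,\text{e}}^2$, while collecting the fluctuations of $(a_{k,k}/\sqrt{\rho_k})\|\ttag_k\|^2$ and of the $(a_{k,k}-\sqrt{\rho_k})$ self-interference prefactor produces $\sigma_{k,k}^2+(\sqrt{\rho_k}M-1)^2$. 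Pulling out the common factor $L$ assembles the four pieces into the displayed $L\xi_k$, and because the dominant self-term fluctuation contributes identically in both hypotheses, $\var(\lambda_k|\mathcal{H}_0)=\var(\lambda_k|\mathcal{H}_1)=L\xi_k$. The main obstacle I anticipate is precisely this variance bookkeeping: reconciling the $(a_{k,k}/\sqrt{\rho_k})\|\ttag_k\|^2$ cross-fluctuation with the self-interference tag-level term to land on the exact form $(\sqrt{\rho_k}M-1)^2+\sigma_{k,k}^2$, and evaluating $\esp\{|\ttag_k^H\mathbf{n}_k|^2\}$ carefully given that the entries of $\mathbf{n}_k$ are correlated through the combiners $\mathbf{w}_{mk}$, which is what produces $\mathrm{Tr}(\mathbf{C}_{nk})$ rather than a naive scalar noise power.
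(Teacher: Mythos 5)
Your proposal follows essentially the same route as the paper's proof in Appendix~A: a central-limit Gaussian approximation of $\lambda_k$, the means $\esp\{\lambda_k|\mathcal{H}_0\}=0$ and $\esp\{\lambda_k|\mathcal{H}_1\}=ML$ obtained via the ZF identity and the zero-mean MMSE estimation error, a term-by-term variance computation yielding the common value $L\xi_k$ under both hypotheses, and inversion of the $\mathrm{P}_{\mathrm{FA},k}$ expression for the Neyman--Pearson threshold. The approach and the bookkeeping match the paper's, including its implicit assumption that the extra $\left(a_{k,k}/\sqrt{\rho_k}\right)\|\ttag_k\|^2$ term under $\mathcal{H}_1$ does not alter the variance.
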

\begin{proof}
    The proof is provided in Appendix A.
\end{proof}
\begin{remark}
 The versatility of the results in Proposition~\ref{prop:cf}  extends to various scenarios. For instance, the hypothesis testing procedure, as well as the corresponding $\mathrm{P}_{\mathrm{FA}, k}$ and  $\mathrm{P}_{\mathrm{D}, k}$ coincide with those previously reported in the literature in the special case of perfect CSI availability and single-antenna AP acting as the receiver and a single user (i.e., $M = N = K = 1$)~\cite{4451099, 8546764}.
\end{remark}
 
\section{Numerical Results}\label{sec:Results}
In this section, the performance of the proposed CF-mMIMO PLA system is assessed in terms of the PD with an optimal decision threshold for different PFA levels. The APs, legitimate users, and Eve are randomly distributed within a $1\times 1$ km$^2$ area with wrapped-around edges to avoid boundary effects. The channel bandwidth is $B=20$ MHz and the pilot length is $\tau_{\text{p}} = 20$. The maximum transmit power for each user and for Eve is set at 100 mW. The corresponding normalized transmit powers, $\rho_k$ and $\rho_{\text{e}}$, can be obtained by dividing the maximum transmit power by the noise power, given by
\begin{align}
    \text{noise power} = B \times k_B \times T_0 \times \text{noise figure }  (\mathrm{W}),
\end{align}
where $k_B = 1,381 \times 10^{-23}$ (Joule per Kelvin) is the Boltzmann constant, $T_0 = 290$ (Kelvin) is the noise temperature, and the noise figure is 9 dB. The large-scale fading coefficient $\beta_{mk}$ is computed as
\begin{align}
    \beta_{mk} = \mathrm{PL}_{mk} 10^{\frac{\sigma_{\mathrm{sh}}z_{\mathrm{mk}}}{10}}, 
\end{align}
where $\mathrm{PL}_{mk}$ represents the path loss, which is computed as in \cite[Eq. (52)]{7827017}, and $10^{\frac{\sigma_{\mathrm{sh}}z_{mk}}{10}}$ represents the shadow fading with the standard deviation $\sigma_{\mathrm{sh}} = 8$ dB, and $z_{mk} \sim \mathcal{N}(0, 1)$. Unless specified otherwise, the length of the transmission block is set as $L=256$, the power allocated for the message signal is $\rho_{\text{s}}=0.95$, and the allowable PFA by the PLA scheme is $p_{\mathrm{FA},k}=0.01$, while $\theta_k^*$ and $\mathrm{P}_{\mathrm{D}, k}$ are computed according to Proposition~\ref{prop:cf}.  

\begin{figure}[t]
    \centering
    \includegraphics[width=0.45\textwidth]{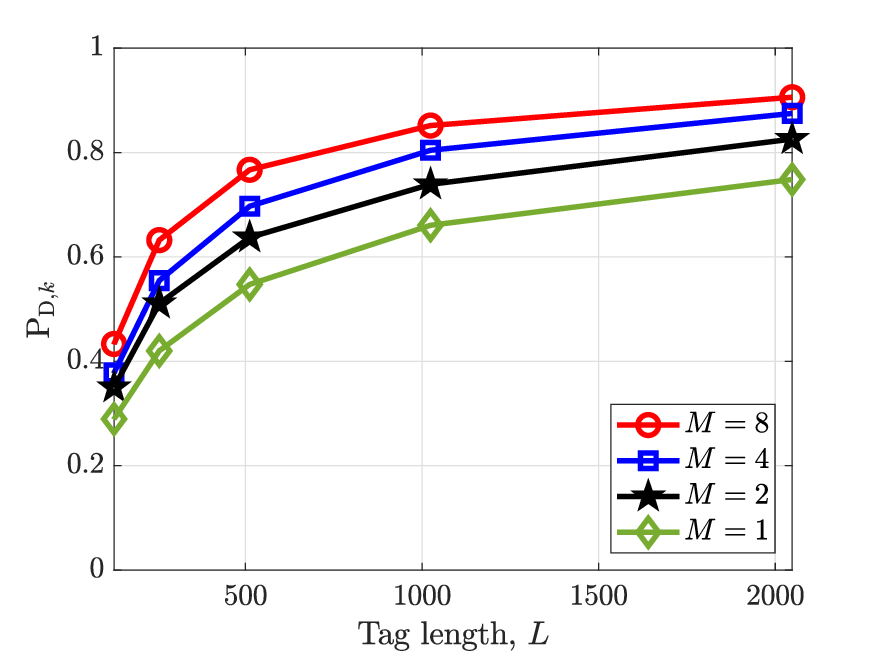}
    \caption{$\mathrm{P}_{\mathrm{D}, k}$ versus the tag length, $L$,  with $K=4$ and $N=10$.}
    \label{fig:PDkvsL}
\end{figure}

As discussed in~\cite{4451099}, a robust tag-based PLA scheme requires sufficient tag energy to enable reliable detection. Since the condition $\rho_{\text{t}} \ll \rho_{\text{s}}$ must be met to ensure that $\hat{\mathbf{s}}_k$ can be accurately recovered from $\hat{\mathbf{x}}_k$ with an acceptable BER, the only viable way to increase the tag energy is by extending the tag length. To examine how the tag length affects the PD performance, Fig.~\ref{fig:PDkvsL} shows $\mathrm{P}_{\mathrm{D}, k}$ as a function of the tag length for various numbers of APs, $ M$, where $K = 4$ and $N = 10$. It is observed that a longer tag improves authentication performance.
However, since the tag is assumed to span the entire transmission block, the available tag energy is limited by the channel coherence time—a parameter that depends on practical propagation environments, and hence, is difficult to control. To overcome this, previous PLA schemes have proposed coding tags across multiple blocks (e.g., requiring detection in two out of four blocks)~\cite{4451099,5740593}. While this method can improve detection, it also increases the risk of falsely accepting malicious messages from Eve. In this regard, our proposed CF-mMIMO infrastructure offers a way to strengthen the PLA scheme, as increasing the number of APs leads to a better PD performance for a fixed tag length $L$. Specifically, a performance gain of about $20\% $ is observed as $M$ increases from $1$ to $8$, regardless of the value of $L$.

\begin{figure}[t]
    \centering
    \includegraphics[width=0.45\textwidth]{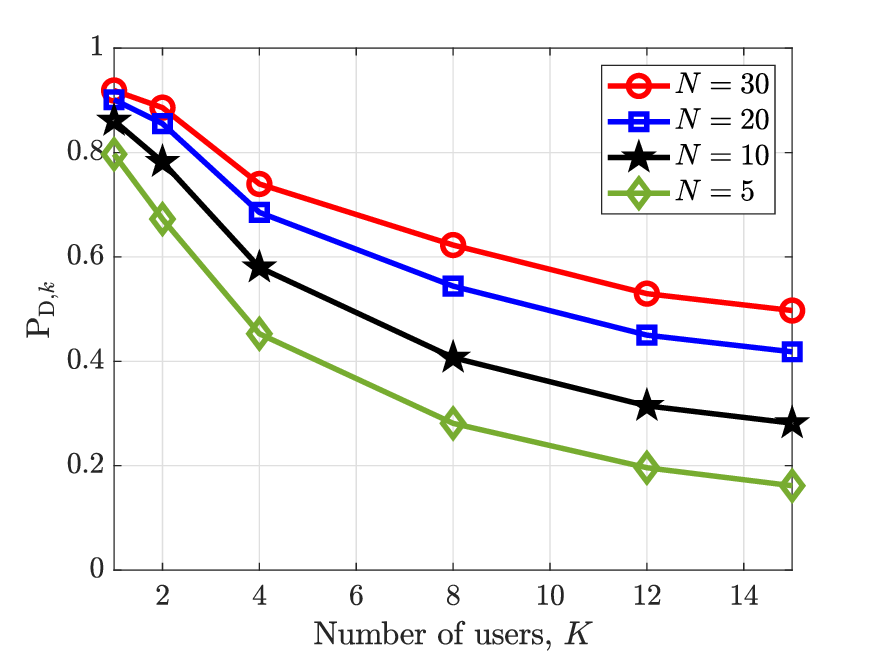}
    \caption{$\mathrm{P}_{\mathrm{D}, k}$ versus the number of users, $K$, with $M=5$.}
    \label{fig:pdkvsk}
\end{figure}

Figure~\ref{fig:pdkvsk} illustrates the detection probability, $\mathrm{P}_{\mathrm{D}, k}$, for a given user $k$ as the total number of users $K$ increases, across different numbers of antennas per  AP, $N$. As $K$ increases, the probability of detecting a specific user decreases due to increased multi-user interference. However, the CF-mMIMO PLA system notably benefits from the array gain associated with increasing the number of antennas per AP, particularly as the number of users grows. Specifically, for $K=1$, increasing $N$ from $5$ to $30$ yields a $10\%$ improvement in $\mathrm{P}_{\mathrm{D}, k}$. For $K=15$, the performance gain rises to $34\%$ over the same antenna range, with $\mathrm{P}_{\mathrm{D}, k}$ still exceeding $50\%$ in this case. These results reinforce the findings from Fig.~\ref{fig:PDkvsL}, highlighting that both micro- and macro-diversity gains provided by CF-mMIMO systems substantially enhance the authentication performance and sustain a satisfactory PD performance even as the number of users increases.


\begin{figure}[t]
    \centering
    \includegraphics[width=0.45\textwidth]{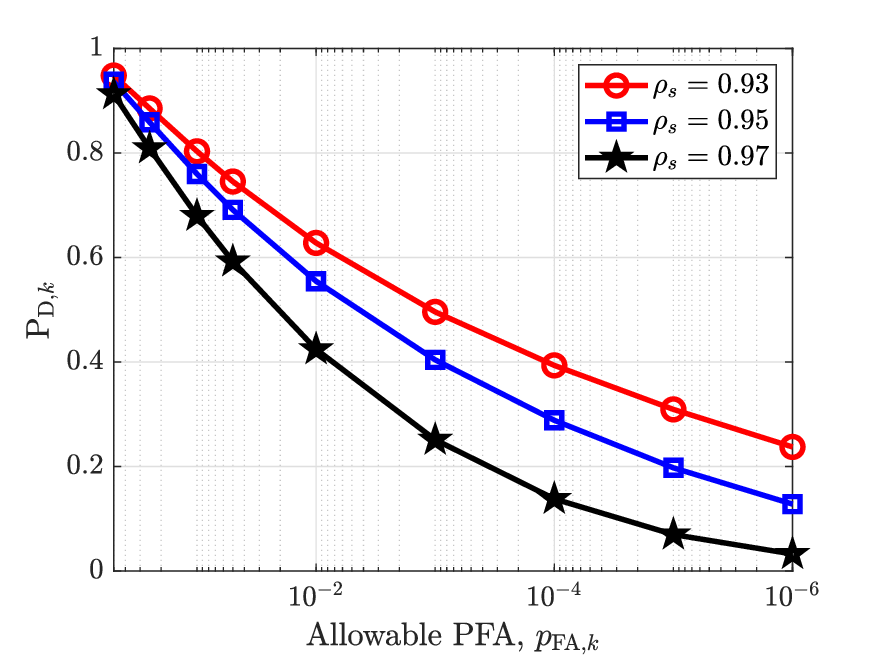}
    \caption{$\mathrm{P}_{\mathrm{D}, k}$ versus the allowable PFA, $p_{\mathrm{FA},k}$,  with $K=4$, $M= 5$, and $N=10$.}
    \label{fig:PDkvspfa}
\end{figure}

Finally, Fig.~\ref{fig:PDkvspfa} shows the $\mathrm{P}_{\mathrm{D}, k}$ versus the allowable PFA for user $k$, $p_{\mathrm{FA},k}$,  for different values of $\rho_{\text{s}}$.  As the allowable PFA increases, the $\mathrm{P}_{\mathrm{D}, k}$ also increases. Moreover, when a stricter power allocation is used for the message signal, the $\mathrm{P}_{\mathrm{D}, k}$ decreases, which is expected due to reduced tag energy. It is important to point out that we assume all users and Eve employ the same power allocation coefficients $\rho_{\text{s}}$ and $\rho_{\text{t}}$. However, in a heterogeneous network, users may have varying quality-of-service requirements, which could also be leveraged to improve the user-specific detection and system performance. Further performance gains may be possible by individually optimizing the power allocation coefficients for each user. We identify this direction as a key focus for future work.

\section{Conclusion}\label{sec:Conclusions}
We proposed a CF-mMIMO PLA framework for authenticating multiple users in the presence of an attacker that launches impersonation attacks. In our system, the APs estimate the channels of legitimate users during the uplink training phase. Then, during the data transmission phase, the users send authentication tags generated from a shared secret key prior agreed with the APs.   We formulated a hypothesis testing problem and derived a closed-form expression for the PD of each user. Our numerical results showcased that the proposed CF-mMIMO PLA framework is effective in improving the robustness of authentication, by taking advantage of the micro- and macro-diversity gains achieved with CF-mMIMO. Furthermore, the proposed methodology may achieve even better performance by adapting to user-specific requirements, such as customized power allocation.
\section*{Appendix A \\Proof of Proposition~1}
The PFA and PD for the $k$th user are given  by
\begin{align}
    \mathrm{P}_{\mathrm{FA}, k} &= \Pr(\lambda_k > \theta_k | \mathcal{H}_0)\label{eq:pfak},\\
    \mathrm{P}_{\mathrm{D}, k} &= \Pr(\lambda_k > \theta_k | \mathcal{H}_1)\label{eq:pdk},
\end{align}
respectively.
From~\eqref{eq:test}, note that $\lambda_k$ is given by the sum of several independent terms. Thus, by the central limit theorem, $\lambda_k$ is approximately Gaussian distributed, especially as $M$, $N$ and/or $K$ increases~\cite{5740593}. Thus, \eqref{eq:pfak} and \eqref{eq:pdk} can be rewritten, respectively, as
\begin{align}
    \mathrm{P}_{\mathrm{FA}, k}&=Q\left(\frac{\theta_k-\esp\left[\lambda_k|\mathcal{H}_0\right]}{\sigma_{\lambda_k \mid \mathcal{H}_0}}\right),\label{eq:Pfacl}\\
    \mathrm{P}_{\mathrm{D}, k}&=Q\left(\frac{\theta_k-\esp\left[\lambda_k| \mathcal{H}_1\right]}{\sigma_{\lambda_k \mid \mathcal{H}_1}}\right),\label{eq:pdcl}
\end{align}
where $\sigma_{\lambda_k \mid \mathcal{H}_0}$ and $\sigma_{\lambda_k \mid \mathcal{H}_1}$ are the standard deviations of $\lambda_k$, given $ \mathcal{H}_0$, and $\lambda_k $, given $\mathcal{H}_1$, respectively. We start by deriving the expectation terms in~\eqref{eq:Pfacl} and~\eqref{eq:pdcl}. For that, knowing that the tag generated from the $k$th user is uncorrelated with the message signals and independent from other tags, and given the test statistic under hypothesis $\mathcal{H}_0$ presented in~\eqref{eq:lambdakh0}, we obtain $\esp\left[\lambda_k|\mathcal{H}_0\right] = 0$. 

Then, based on~\eqref{eq:lambdakh1}, we have  $$\esp\left[\lambda_k|\mathcal{H}_1\right] = \frac{\esp[a_{k,k}]L}{\sqrt{\rho_k}}.$$ To compute $\esp\left[a_{k,k}\right]$, the channel vector from the $k$th user to the $m$th AP is decomposed as
\begin{align}
    \mathbf{g}_{mk} = \hat{\mathbf{g}}_{mk} + \bm{\epsilon}_{mk},
\end{align}
where $\bm{\epsilon}_{mk}$ is the channel estimation error vector, which has i.i.d. $\mathcal{CN}\left(0, \frac{\beta_{mk}}{\tau_{\text{p}}\rho_k\beta_{mk}+1}\right)$ entries. Since we employ an MMSE channel estimation, $\hat{g}_{mk}$ and $\bm{\epsilon}_{mk}$ are independent~\cite{6736537}. Thus, $\esp\left[\lambda_k|\mathcal{H}_1\right]$ is rewritten as
\begin{align}\label{eq:esph1}
    \esp\left[\lambda_k|\mathcal{H}_1\right] &= \esp\left[\sqrt{\rho_k}\sum_{m=1}^{M}\left(\mathbf{w}_{mk}^H\hat{\mathbf{g}}_{mk} + \mathbf{w}_{mk}^H\bm{\epsilon}_{mk}\right)\right] \frac{L}{\sqrt{\rho_k}}\nonumber\\
    &= ML.
\end{align}
Subsequently, since $\lambda_k$ in \eqref{eq:test} is given by a sum of independent terms, and knowing that $$\var(\ttag_k^H\mathbf{s}_k)=\var(\ttag_k^H\mathbf{s}_{k'})=\var(\ttag_k^H\mathbf{s}_e) = L,$$ the variance of $\lambda_k$, given $\mathcal{H}_0$, is obtained as


\begin{align}\label{eq:varh0}
    &\var(\lambda_k|\mathcal{H}_0) = L\left(\frac{\rho_{\text{s}}^2}{\rho_k\rho_{\text{t}}^2}(\sigma_{k,k}^2+(\sqrt{\rho_k}M-1)^2) \right.\nonumber\\
    &\qquad\qquad\quad\left.+ \left(\frac{1}{\rho_k} + \frac{\rho_{\text{s}}^2}{\rho_k\rho_{\text{t}}^2}\right)\left(\sum_{k' \neq k}^K\sigma_{k,k'}^2 + \sigma_{k,e}^2\right)\right.\nonumber\\ 
   & \qquad\qquad\quad\left.+ \frac{1}{\rho_k\rho_{\text{t}}^2}\mathrm{Tr}(\mathbf{C}_{nk})\right).
\end{align}

From~\eqref{eq:lambdakh1}, we note that the variance of $\lambda_k|\mathcal{H}_1$ is also given by \eqref{eq:varh0}.
Then, by taking the square-root of \eqref{eq:varh0}, and replacing it alongside $\esp\left[\lambda_k|\mathcal{H}_0\right]$ into \eqref{eq:Pfacl}, \eqref{eq:pfakf} is achieved. Moreover, \eqref{eq:pdkf} is attained by plugging \eqref{eq:esph1}, and the square-root of \eqref{eq:varh0} into \eqref{eq:pdcl}. From the Neyman-Pearson Theorem~\cite{850674}, the optimal decision threshold, $\theta_k^*$, is calculated by setting $\mathrm{P}_{\mathrm{FA}, k} = p_{\mathrm{FA},k}$ as in \eqref{eq:opttheta}. 


\bibliographystyle{IEEEtran}  
\bibliography{references}

\begin{thebibliography}{10}
\providecommand{\url}[1]{#1}
\csname url@samestyle\endcsname
\providecommand{\newblock}{\relax}
\providecommand{\bibinfo}[2]{#2}
\providecommand{\BIBentrySTDinterwordspacing}{\spaceskip=0pt\relax}
\providecommand{\BIBentryALTinterwordstretchfactor}{4}
\providecommand{\BIBentryALTinterwordspacing}{\spaceskip=\fontdimen2\font plus
\BIBentryALTinterwordstretchfactor\fontdimen3\font minus \fontdimen4\font\relax}
\providecommand{\BIBforeignlanguage}[2]{{%
\expandafter\ifx\csname l@#1\endcsname\relax
\typeout{** WARNING: IEEEtran.bst: No hyphenation pattern has been}%
\typeout{** loaded for the language `#1'. Using the pattern for}%
\typeout{** the default language instead.}%
\else
\language=\csname l@#1\endcsname
\fi
#2}}
\providecommand{\BIBdecl}{\relax}
\BIBdecl

\bibitem{9279294}
N.~Xie, Z.~Li, and H.~Tan, ``A survey of physical-layer authentication in wireless communications,'' \emph{IEEE Commun. Surv. Tutor.}, vol.~23, no.~1, pp. 282--310, Firstquarter 2021.

\bibitem{Mobini:TIFS:2019}
Z.~Mobini, M.~Mohammadi, and C.~Tellambura, ``Wireless-powered full-duplex relay and friendly jamming for secure cooperative communications,'' \emph{IEEE Trans. Inf. Forensics Security}, vol.~14, no.~3, pp. 621--634, Mar. 2019.

\bibitem{8920228}
P.~Zhang, T.~Taleb, X.~Jiang, and B.~Wu, ``Physical layer authentication for massive {MIMO} systems with hardware impairments,'' \emph{IEEE Trans. Wireless Commun.}, vol.~19, no.~3, pp. 1563--1576, Mar. 2020.

\bibitem{10694672}
M.~Srinivasan, L.~Senigagliesi, H.~Chen, A.~Chorti, M.~Baldi, and H.~Wymeersch, ``{AoA}-based physical layer authentication in analog arrays under impersonation attacks,'' in \emph{Proc. IEEE SPAWC}, Sep. 2024, pp. 496--500.

\bibitem{10005830}
P.~Zhang, Y.~Teng, Y.~Shen, X.~Jiang, and F.~Xiao, ``Tag-based {PHY}-layer authentication for {RIS}-assisted communication systems,'' \emph{IEEE Trans. Dependable Secure Comput.}, vol.~20, no.~6, pp. 4778--4792, Nov. 2023.

\bibitem{10031268}
N.~Xie, M.~Sha, T.~Hu, and H.~Tan, ``Multi-user physical-layer authentication and classification,'' \emph{IEEE Trans. Wireless Commun.}, vol.~22, no.~9, pp. 6171--6184, Sep. 2023.

\bibitem{9672766}
S.~J. Maeng, Y.~Yapıcı, {\.I}.~G{\"u}ven{\c c}, A.~Bhuyan, and H.~Dai, ``Precoder design for physical-layer security and authentication in massive {MIMO} {UAV} communications,'' \emph{IEEE Trans. on Veh. Technol.}, vol.~71, no.~3, pp. 2949--2964, Mar. 2022.

\bibitem{5740593}
P.~L. Yu and B.~M. Sadler, ``{MIMO} authentication via deliberate fingerprinting at the physical layer,'' \emph{IEEE Trans. Inf. Forensics and Secur.}, vol.~6, no.~3, pp. 606--615, Sep. 2011.

\bibitem{Mohammadi:PROC.2024}
M.~Mohammadi, Z.~Mobini, H.~Ngo, and M.~Matthaiou, ``Next-generation multiple access with cell-free massive {MIMO},'' \emph{Proc. {IEEE}}, vol. 112, no.~9, pp. 1372--1420, Sept. 2024.

\bibitem{8546764}
N.~Xie, C.~Chen, and Z.~Ming, ``Security model of authentication at the physical layer and performance analysis over fading channels,'' \emph{IEEE Trans. Dependable Secure Comput.}, vol.~18, no.~1, pp. 253--268, Jan. 2021.

\bibitem{bookCrypto}
A.~J. Menezes, P.~C. Van~Oorschot, and S.~A. Vanstone, \emph{Handbook of Applied Cryptography}.\hskip 1em plus 0.5em minus 0.4em\relax CRC press, 2018.

\bibitem{4451099}
P.~L. Yu, J.~S. Baras, and B.~M. Sadler, ``Physical-layer authentication,'' \emph{IEEE Trans. Inf. Forensics and Secur.}, vol.~3, no.~1, pp. 38--51, Mar. 2008.

\bibitem{7120016}
P.~L. Yu, G.~Verma, and B.~M. Sadler, ``Wireless physical layer authentication via fingerprint embedding,'' \emph{IEEE Commun. Mag.}, vol.~53, no.~6, pp. 48--53, Jun. 2015.

\bibitem{850674}
U.~Maurer, ``Authentication theory and hypothesis testing,'' \emph{IEEE Trans. Inf. Theory}, vol.~46, no.~4, pp. 1350--1356, July 2000.

\bibitem{7827017}
H.~Q. Ngo, A.~Ashikhmin, H.~Yang, E.~G. Larsson, and T.~L. Marzetta, ``Cell-free massive {MIMO} versus small cells,'' \emph{IEEE Trans. Wireless Commun.}, vol.~16, no.~3, pp. 1834--1850, Jan. 2017.

\bibitem{6736537}
H.~Q. Ngo, E.~G. Larsson, and T.~L. Marzetta, ``Massive {MU-MIMO} downlink {TDD} systems with linear precoding and downlink pilots,'' in \emph{Proc. IEEE ALLERTON}, Oct. 2013, pp. 293--298.

\end{thebibliography}

\end{document}